\newtheorem{theorem}{Theorem}
  \providecommand\BibTeX{{%
    \normalfont B\kern-0.5em{\scshape i\kern-0.25em b}\kern-0.8em\TeX}}}
\begin{document}




\title{Bid Shading by Win-Rate Estimation and Surplus Maximization}

    

\author{Shengjun Pan, Brendan Kitts, Tian Zhou, Hao He, Bharatbhushan Shetty, Aaron Flores, Djordje Gligorijevic, Junwei Pan, Tingyu Mao, San Gultekin and Jianlong Zhang}

\affiliation{
  \institution{Verizon Media}
  \city{Sunnyvale}
  \state{California}
}

\email{{alanpan, brendan.kitts, tian.zhou, hao.he, bharatbs, aaron.flores}@verizonmedia.com}
\email{{djordje, jwpan, tingyu.mao, sgultekin, jianlong}@verizonmedia.com}

\renewcommand{\shortauthors}{S Pan, B Kitts, T Zhou, H He, B Shetty, A Flores, D Gligorijevic, J Pan, T Mao, S Gultekin and J Zhang}

\begin{abstract}
This paper describes a new win-rate based bid shading algorithm (WR) that does not rely on the minimum-bid-to-win feedback from a Sell-Side Platform (SSP). The method uses a modified logistic regression to predict the profit from each possible shaded bid price. The function form allows fast maximization at run-time, a key requirement for Real-Time Bidding (RTB) systems. We report production results from this method along with several other algorithms.
We found that bid shading, in general, can deliver significant value to advertisers, reducing price per impression to about 55\% of the unshaded cost. Further, the particular approach described in this paper captures 7\% more profit for advertisers, than do benchmark methods of just bidding the most probable winning price. We also report 4.3\% higher surplus than an industry Sell-Side Platform shading service. Furthermore, we observed 3\%~--~7\% lower eCPM, eCPC and eCPA when the algorithm was integrated with budget controllers. We attribute the gains above as being mainly due to the explicit maximization of the surplus function, and note that other algorithms can take advantage of this same approach.
\end{abstract}

\begin{CCSXML}
<ccs2012>
<concept>
<concept_id>10010405.10003550.10003596</concept_id>
<concept_desc>Applied computing~Online auctions</concept_desc>
<concept_significance>500</concept_significance>
</concept>
<concept>
<concept_id>10010147.10010257.10010321</concept_id>
<concept_desc>Computing methodologies~Machine learning algorithms</concept_desc>
<concept_significance>500</concept_significance>
</concept>
<concept>
<concept_id>10002951.10003260.10003272.10003275</concept_id>
<concept_desc>Information systems~Display advertising</concept_desc>
<concept_significance>500</concept_significance>
</concept>
</ccs2012>
\end{CCSXML}

\ccsdesc[500]{Applied computing~Online auctions}
\ccsdesc[500]{Information systems~Display advertising}
\ccsdesc[500]{Computing methodologies~Machine learning algorithms}


\keywords{online bidding, shading, auction, advertising, bid, optimization}


\maketitle

\section{Introduction}
\label{sec:introduction}
Online Advertising auctions have been dominated by Second Priced Auctions (SPAs) since their early implementations in the 1990s. Google famously used Second Price Auctions for its Adwords and Adsense auctions, and, in 2017, generated 90\% of its revenue from Second Price Auctions \cite{google:sec}. 
However, there was a dramatic shift in online advertising between 2018 and 2019. 
As of 2020, almost all major display ad auctions have switched from Second to First Price Auctions (FPAs) \cite{google:rtb,google:rtb2}. Several factors conspired to drive the industry towards the adoption of FPA, including the widespread growth of header bidding with its incompatibility with SPAs ~\cite{hearts2018:sold},  increased demand for transparency and accountability \cite{chari1992us,sluis:guardiansuesrubicon,getintent:rtb1,rubicon:openletter}, and yield concerns \cite{appnexus:BPO, rubicon:EMR, kitts:fpa}.

Unfortunately for advertisers, First Price Auctions leave private value bidders susceptible to over-paying. For instance, if the bidder's private value of an impression was \$10.00, and the winner knew the second placed bidder's price was just \$1.00, they could bid just \$1.01 and effectively collect a \$8.99 profit. If they instead bid their private value, they would be charged the entirety of the \$10.00 and they would have \$0  profit!

The practice of strategically decreasing bid price below the buyer's private value is known as \textit{bid shading}. Bid shading has been observed in a variety of real world auctions including FCC Spectrum \cite{chakravorti1995auctioning}, US Oil Deposits \cite{capen:auctions}, Cattle auctions \cite{crespi2005multinomial}, US Treasury auctions \cite{hortaccsu2018bid} and others. 
Despite its widespread use, there has been little work done on methods to systematically exploit shading, particularly when data is available to make it possible to predict auction clearing prices.

\section{The Bid Shading Problem}
\label{sec:bid-shading}
Given bid request $b_i$, and a valuation $V_i$, if we won the impression, which represents how much the advertiser expects to capture from the impression, how much should the advertiser discount their valuation? Assuming that the valuation $V_i$ is an accurate representation of the dollar value that the advertiser expects to obtain, and the bid $b_i=g_iV_i$ is also in real dollars, the advertiser's financial gain, or \emph{surplus}, is equal to: 
\begin{align}
surplus & \stackrel{\rm def}{=}
\sum_i(V_i - g_iV_i)\, \mathbf{I}(g_iV_i> \hat b_i),
\label{eq:surplus_per_response}
\end{align}
where $g_i\in(0, 1]$ is the shading factor to apply to the bidder's private value $V_i$, $\hat b_i$ is the minimum bid price to win,
and $\mathbf{I} (b_i> \hat b_i) = 1$ if the impression is won, and $0$ otherwise.
The task is to find a shading factor $g_i$ that maximizes the surplus to the advertiser.

\section{Previous Work}
\label{sec:related-work}

\subsection{Bid Shading Theory}

Bid shading is a common tactic in repeated First Price Auctions.
\cite{zulehner2009bidding} found robust evidence of shading in Austrian livestock auctions , \cite{crespi2005multinomial} reported shading in a Texas cattle market, and \cite{hortaccsu2018bid} found the practice in auctions for US Treasury notes.

Auctions generally need to be repeated and predictable for bid shading to be practically feasible, but under these conditions, it often occurs organically. Pownall and Wolk (2013) showed that bid shading for repeated internet auction prices increased over time; by about 26\% after 10 iterations \cite{pownall2013bidding}. When there are enough repeated games bidders can even develop collusive shading strategies where bidders actively coordinate to have low bids \cite{lengwiler2010auctions, hendricks1989collusion}. 

Although behavior varies from auction to auction, several studies have shown that the magnitude of shading tends to increase with the average price on the auction  \cite{chakravorti1995auctioning, battigalli2003rationalizable, hortaccsu2018bid}. This is likely to occur because of the more substantial losses involved on higher priced auctions, if shading isn't sufficient. 
This result suggests that using a measure of the expense of the auction is valuable when trying to estimate the shading factor - a finding we revisit later in Section \ref{sec:insights}.

In situations where the supply is plentiful, and demand limited, buyers can shade deeper. In looking at this phenomemon in the US Treasury Market, Hortacsu et. al. (2017) find that large institutional buyers on average shade more aggressively than small indirect buyers \cite{hortaccsu2018bid} . This seems to be because these large buyers effectively control a large percentage of bidders, and so it is almost like they are able to coordinate the buying of multiple buyers. They can therefore drive the bid prices for a large percentage of bidders down, whilst still meeting their goals. 

\subsection{Previous Algorithms}

In 2018 and 2019, Rubicon~\cite{adx:rubicon, rubicon:EMR}, AppNexus~\cite{appnexus:BPO} and Google~\cite{shields:googlefp, sluis:googlefp, google:rtb, google:rtbprotocol} all released Sell-Side bid shading services. Leading up to this, there had been reports of dramatically lower ROI from the new First Price Auctions~\cite{hearts2018:sold, kitts:fpa}. Never-the-less, this is a surprising move as Sell-Side Platforms are potentially decreasing their yield, and they clearly have a different incentive from buyers. The sell-side algorithms seem to reflect this incentive difference. 
The descriptions of these services suggest that they try to keep bid prices high enough to maintain a set win-rate, but preventing the bid price from becoming too extreme; which might risk an advertiser to halt their bidding due to poor Return on Investment. Rubicon released data suggesting that their service decreases First Price CPMs by a modest 5\% over 4 months~\cite{rubicon:EMR}. AppNexus reported that prices under their service were 25\% lower over 100 days~\cite{appnexus:BPO}. We tried one of the services, and recorded the shading distribution in Figure~\ref{fig:winrate_threedistributions}.
Most of the bid shades were about 90\%, which is conservative for our problem. Further analysis on Sell-Side "Bid Shaders" are in Section~\ref{sec:experiments}.

On the Demand Side, a variety of algorithms have been explored, although generally not exactly for bid shading applications.
\cite{wu2015predicting} developed a censored winning bid probability estimator. They observed that when a bidder submitted a bid and lost, the information gained is that the winning price is somewhere above the submitted price, and when a bidder submits a bid and wins, the minimum bid to win is at a price somewhere below their submitted bid. Using these two cases, the authors developed a Maximum Likelihood procedure to estimate the probability of the winning bid being any of the bid prices. This created a distribution of the probable winning bids, with the most likely winning price being used for bidding.  \cite{wu2018deep} extended their work to using a neural network to estimate the parameters of the win probability distribution.

An unpublished implementation \cite{dsp:lr} used Logistic Regression to predict the optimal bid shading factor using features in the request. The predicted factor was then used as a multiplier on the unshaded bid price.

The approaches described above \cite{wu2015predicting, wu2018deep, dsp:lr} all focus on predicting the probable winning bid price.
However, the surplus maximum is very different from the minimum bid to win. 
An accurate (unbiased, symmetric noise) win probability estimator will be below the winning bid price about 50\% of the time - this means that 50\% of the surplus won't be captured \textit{by design}. If the change in new impressions captured at a higher bid price, over-weights the marginal decrease in profitability per impression, the optimum for surplus can be higher than the most probable bid.

Unpublished work \cite{niklas:shading} is one of the few that we know of to attempt to explicitly maximize the surplus function. These authors estimate shading factors for a set of fixed segments based on three bid samples taken in real-time to estimate the local surplus landscape. However the approach has many drawbacks: the segments have to be predetermined and finding a suitable segment definition requires substantial analysis. 
The information across segments is not shared, which is a problem for segments that do not have enough traffic. Further, the set of possible segments quickly explode as the number of variables used to define them increases.
The approach taken in this paper uses a model to estimate the surplus function, and so a very large number of features can be used, and model induction is also automated, easy to maintain, and improve. 

In order to compare the method we used to prior work, we have included an implementation of the Logistic Regression algorithm from 
\cite{dsp:lr}, the Distribution Estimator algorithm from  \cite{wu2015predicting}, and the Segment-based Surplus maximizer \cite{niklas:shading} in the benchmarks which we use to analyze algorithm performance in Section \ref{sec:experiments}.

\section{Canonical Algorithm}
\label{sec:methodology}
Given a bid request for First Price Auction, 
let $x_1, x_2, \ldots, x_k$ be the set of publisher and user attributes that we will use to find the best bid price $b^*$. Let $\hat b$ be the highest bid price from other competing bidders, which value is unknown. Note that $\hat b$ depends on both attributes $x_i$s which represent the item that is being auctioned, and  external competing bidder behavior. 
$\hat b$ follows an unknown distribution $\mathcal{D}_{\hat b\mid x_1, x_2, \ldots, x_k}$ with cumulative probability distribution $\text{cdf}_{\hat b\mid x_1, x_2, \ldots, x_k}$.
When the context is clear, we use $\mathcal{D}_{\hat b}$ and $\text{cdf}_{\hat b}$ for simplicity.

If the distribution $\mathcal{D}_{\hat b}$ is known, we can calculate the optimal bid price $b^*$ directly as follows. Let $\mathbf{I}(b>\hat b)$ be 1 if $b>\hat b$ and 0 otherwise, which indicates if the submitted price $b$ wins the auction. Then the \text{surplus} when the submitted price is $b$ would be
\begin{align}
    surplus &= (V-b)\mathbf{I}(b>\hat b)
    =\begin{cases}
    V-b, & \text{if }b > \hat b, \\
    0, & \text{otherwise}.
    \end{cases}
\end{align}
The optimal bid price can be calculated as the price that maximizes the expected surplus

\begin{align}
    b^* &= \mathop{\arg\max}_{b>0}\mathbb{E}[surplus] \nonumber\\
    & = \mathop{\arg\max}_{b>0}\mathbb{E}\left[(V-b)\,\mathbf{I}(b>\hat b)\right] \nonumber\\
    & =\mathop{\arg\max}_{b>0}(V-b)\, \text{cdf}_{\hat b}(b).\label{eq:max-surplus}
\end{align}
For simple forms of $\text{cdf}_{\hat b}(b)$, the optimization problem~\eqref{eq:max-surplus} can be solved analytically. For example, suppose $\hat b$ distributes uniformly over the interval $[B_0, B_1]$, where $0\leq B_0 < B_1$. This produces a  $\text{cdf}_{\hat b}(b)$ that is piece-wize linear, with a flat region of 0.0 from $[0, B_0]$, a constant slope from $[B_0, B_1]$, and another flat region of 1.0 above $B_1$. The bid price $b^*$ that maximizes the surplus can be calculated as below 
\begin{align*}
\mathbb{E}[surplus] &= (V-b)\,\text{cdf}_{\hat b}(b) \\
& = \begin{cases}
    0, & \text{ if } b < B_0, \\
    (V-b)(b-B_0)/(B_1-B_0), &\text{ if } B_0 \leq b \leq B_1, \\
    V-b, & \text{ if } b > B_1.
\end{cases}
\end{align*}
It is straightforward to see that
\[\max\mathbb{E}[surplus] =\begin{cases}
    \frac{(V-B_0)^2}{4(B_1-B_0)}\text{ at } b^*=\frac{V-B_0}{2}, & \text{ if } V\leq 2B_1-B_0, \\
    V-B_1 \text{ at } b^*=B_1, & \text { if } V>2B_1-B_0.
\end{cases}
\]

\begin{figure}
\centering

\subfloat{\includegraphics[trim=1cm 7.25cm 1cm 7.25cm, clip=true, width=0.5\columnwidth]{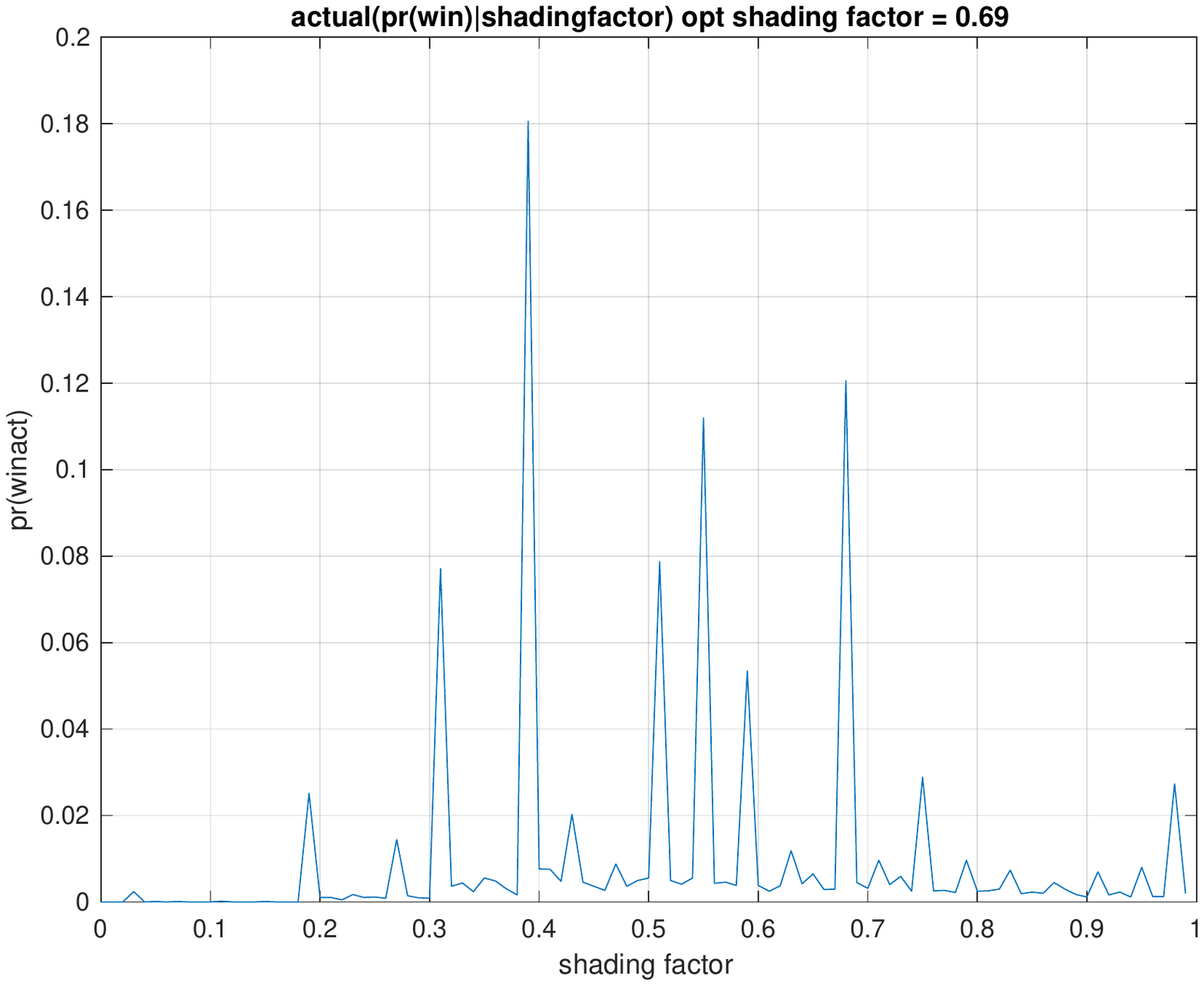}} 
\subfloat{\includegraphics[trim=1cm 7.25cm 1cm 7.25cm, clip=true, width=0.5\columnwidth]{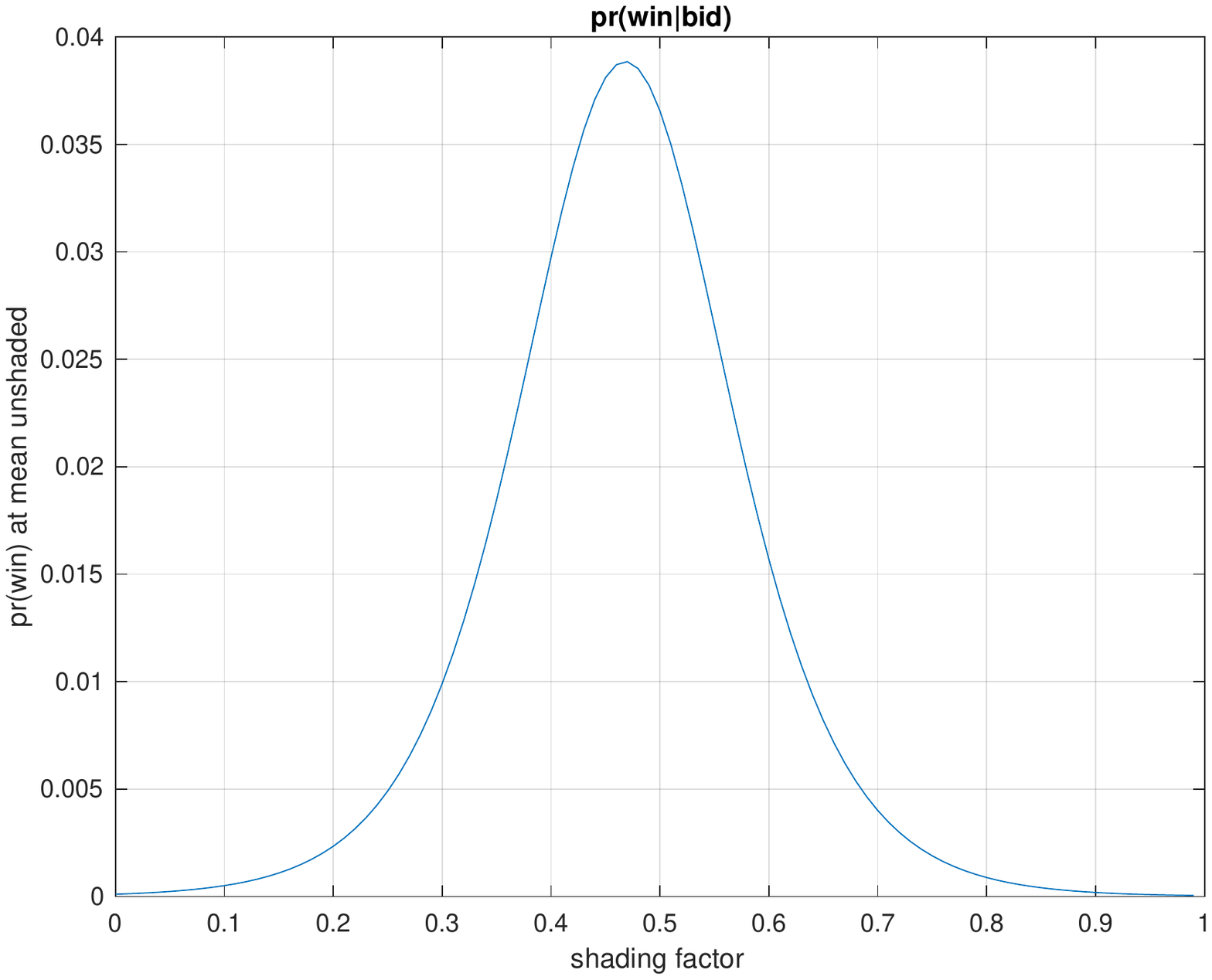}}\\
\subfloat{\includegraphics[trim=1cm 7.25cm 1cm 7.25cm, clip=true, width=0.5\columnwidth]{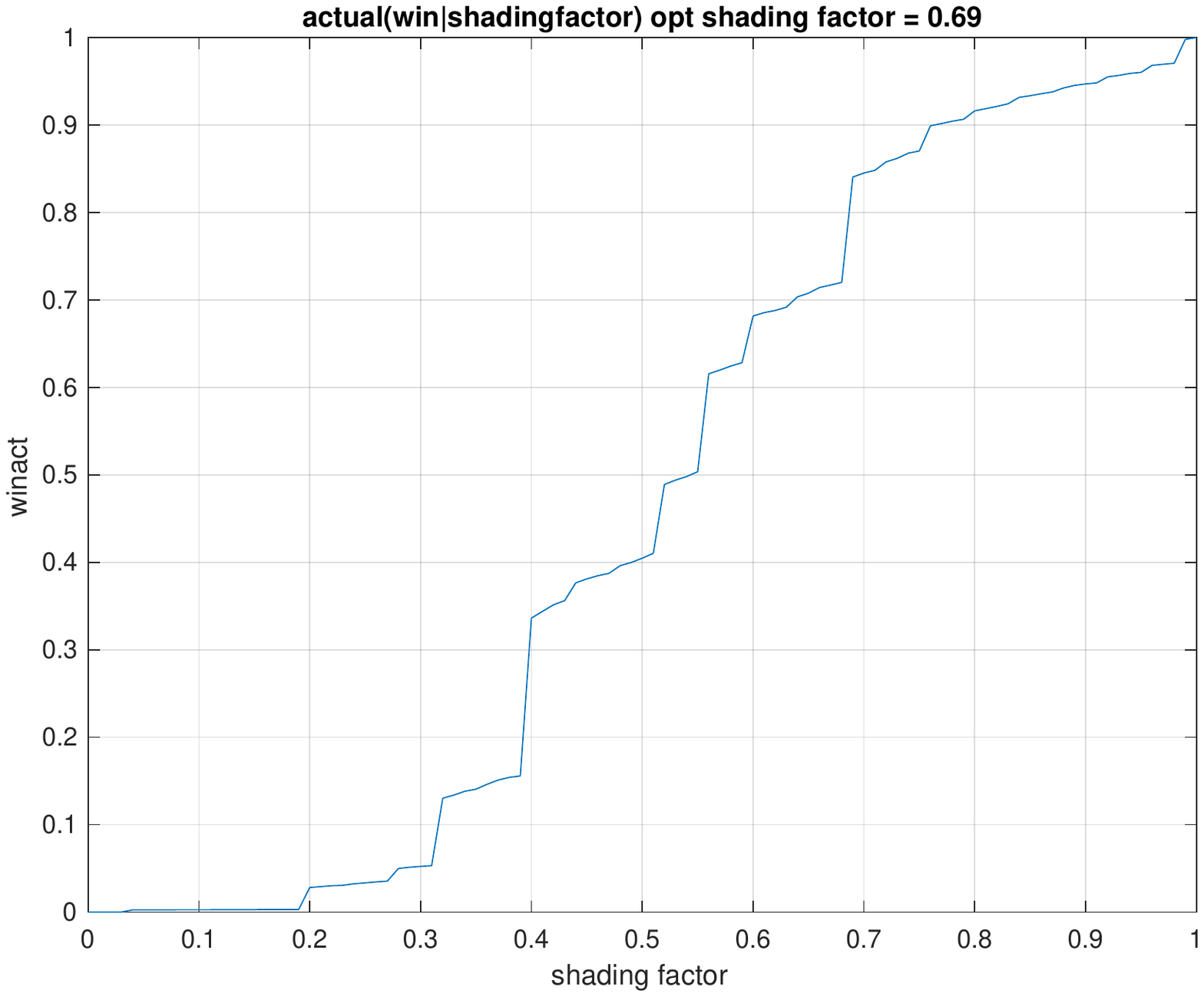}} 
\subfloat{\includegraphics[trim=1cm 7.25cm 1cm 7.25cm, clip=true, width=0.5\columnwidth]{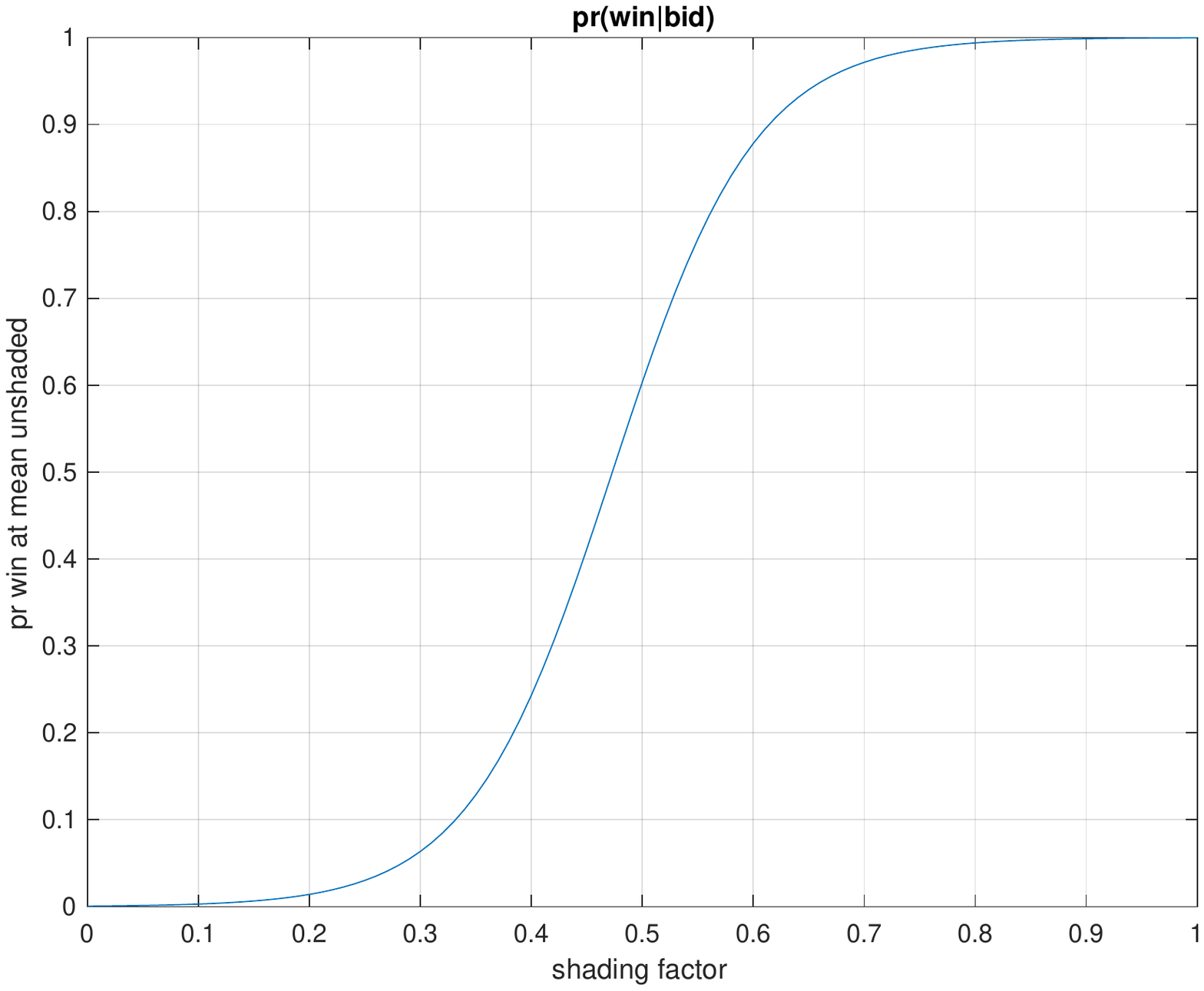}}\\
\subfloat{\includegraphics[trim=1cm 7.25cm 1cm 7.25cm, clip=true, width=0.5\columnwidth]{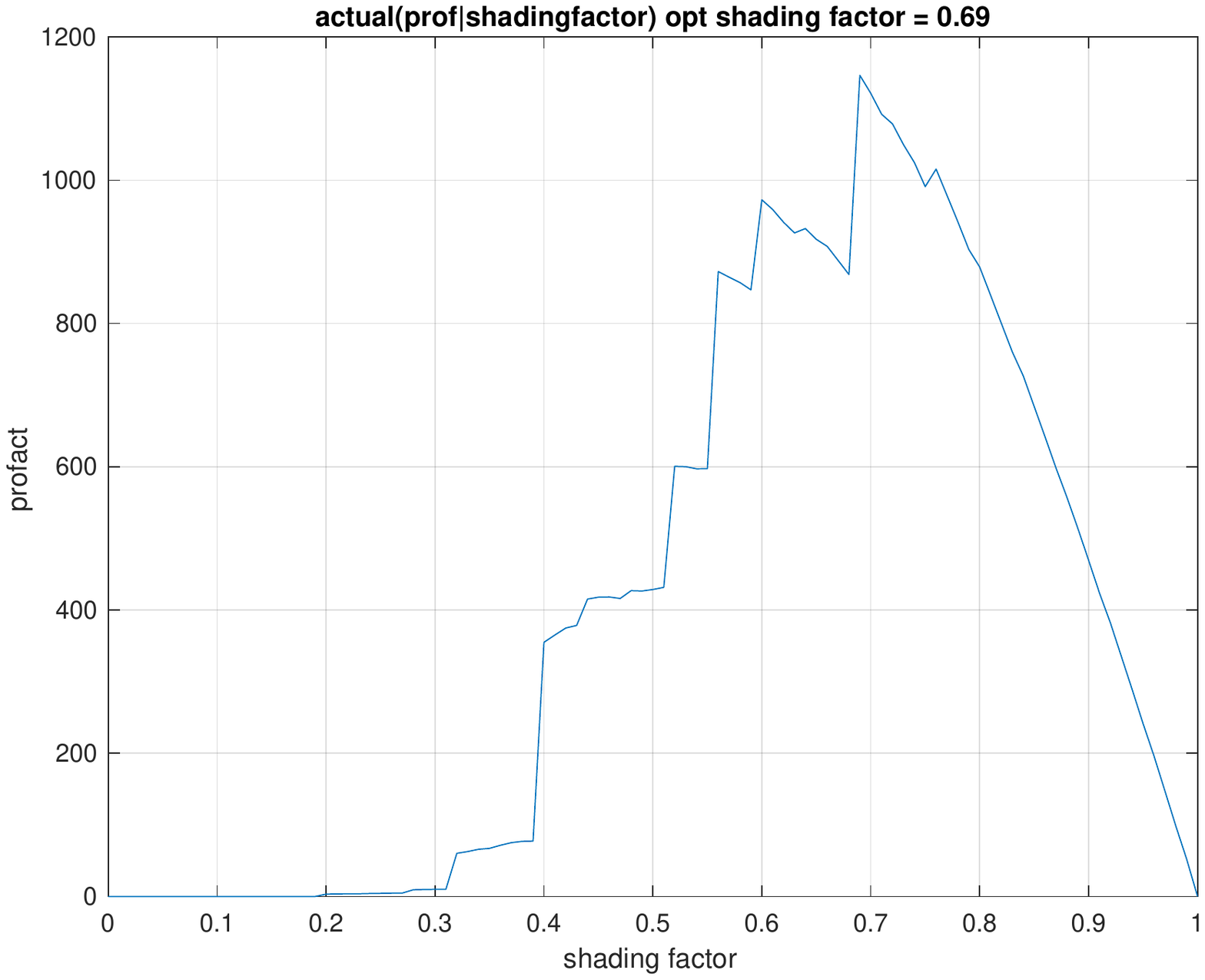}} 
\subfloat{\includegraphics[trim=1cm 7.25cm 1cm 7.25cm, clip=true, width=0.5\columnwidth]{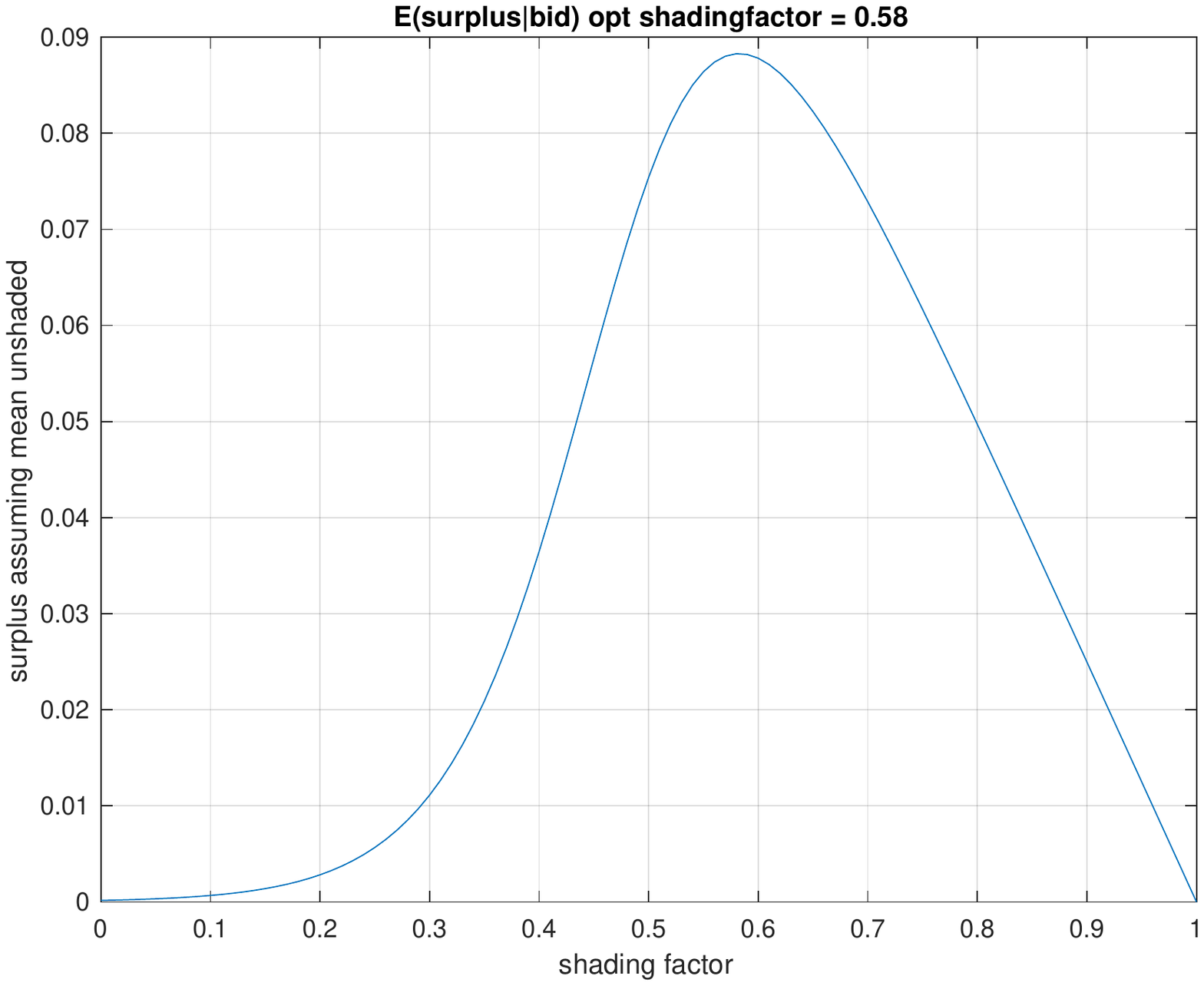}}
\caption{Top: Actual PDF for $\hat b$ (left) versus estimate (right); middle: CDF actual versus estimate; bottom: Surplus distribution actual versus estimate.}
\label{fig:mb2w_distributions}
\end{figure}

However, in practice, we rarely see such simple form of distributions.  Figure~\ref{fig:mb2w_distributions} shows an example of the empirical PDF of $\hat b$, including the derived surplus distribution.

Our approach breaks into two steps:
\begin{description}
    \item[Step 1] Estimate the distribution $\mathcal{D}_{\hat b\mid x_1, x_2, \ldots, x_k}$;
    \item[Step 2] Solve the maximization problem~\eqref{eq:max-surplus}.
\end{description}

\subsection{Distribution Estimation}

\label{subsubsec:distribution}
Given publisher and user attributions $x_1, \ldots, x_k$ and bid price $b$, we first train a classification model with historical data:
\begin{align}
    \text{Pr(win)}
    &= \text{cdf}_{\hat b}(b) =  F\left(w_0+\sum_{i=1}^k w_ix_i + \beta\, g(b)\right),
\end{align}
where $F$ is a fitting function that outputs a value between 0 and 1, which must be monotonically increasing in $b$ (higher bid price leads to higher winning rate), and $g(b)$ is a bid  transformation function such that $F\rightarrow 0$ as $b\rightarrow 0$, that is, as bid price goes to 0, the winning probability also goes to 0, and the weights to be learned are $w_0, w_1, \ldots, w_k$ and $\beta$.

For $g(b)$, we use the logarithm of bid price $g(b) =\log (b)$ so that
$g(b)\rightarrow -\infty$, as $b\rightarrow 0$. For $F$, we use the logistic function~\cite{cosma:linear, faraway:linear} so that $F(x)\rightarrow 0$ as $x\rightarrow -\infty$, with the constraint that $\beta > 0$.

Other forms $g(b)$ and $F$ can be explored, but our choices of simple forms, besides satisfying mentioned constraints, allow the maximization problem~\eqref{eq:max-surplus} in Step 2 to be solved efficiently. More details will follow later in Subsection~\ref{subsec:surplus-max}. With our choice of functions $F$ and $g(b)$, we have the following win-rate classification model:
\begin{align}
    \text{Pr(win)} &= \left(1+e^{-(w_0+\sum_{i=1}^k w_ix_i + \beta\, \log b)}\right)^{-1}, \label{eq:prod-win-rate}
\end{align}
which can be trained by gradient descent \cite{faraway:linear}. Note that the training should be constrained such that $\beta > 0$. In practice we found that it's not necessary since our learned $\beta$ without constraint turns out always positive.

\subsection{Surplus Maximization}
\label{subsec:surplus-max}
With a trained win-rate model from~\eqref{eq:prod-win-rate}, the optimal bid price $b^*$ can now be found by solving the optimization~\eqref{eq:max-surplus}:
\begin{align}
    b^* & = \mathop{\arg\max}_{b>0}(V-b)\, \text{logistic}\left(w_0+\sum_{i=1}^k w_ix_i + \beta\, \log b\right) \nonumber\\
    & = \mathop{\arg\max}_{b>0} (V-b)\left(1+e^{-w_0-\sum_{i=1}^k w_ix_i-\beta\log b}\right)^{-1} \nonumber\\
    & = \mathop{\arg\max}_{b>0} \frac{V-b}{1+e^{-\alpha} b^{-\beta}},
    \label{eq:func-b}
\end{align}
where $\alpha = w_0+\sum_{i=1}^k w_ix_i$.

We show below that, for $b>0$, there is a single optimum bid price $b^*$ which can be bounded from above and below. These bounds make it possible to implement a fast bisection search.

\begin{theorem}\label{thm:interval}
For any $\beta>0$,
\[
f(b)=\frac{V-b}{1+e^{-\alpha} b^{-\beta}}
\]
is maximized at some unique $b^*$ such that
\[  
\frac{\beta}{\beta+1+e^{\alpha} V^{\beta}}V \leq b^* < \frac{\beta}{\beta+1}V.
\]
\end{theorem}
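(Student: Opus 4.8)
The plan is to reduce the first-order condition to a transparent monotone equation and then read both bounds off it. First I would rewrite $f(b)=(V-b)\,w(b)$, where $w(b)=(1+e^{-\alpha}b^{-\beta})^{-1}$ is the logistic win probability. Since $\beta>0$, the factor $w$ is strictly increasing on $(0,\infty)$ with $w(0^+)=0$ and $w(\infty)=1$, so $f(0^+)=0$, $f(V)=0$, $f>0$ on $(0,V)$, and $f<0$ for $b>V$. Hence any maximizer lies in $(0,V)$ and the maximal value is strictly positive. (This tacitly uses $V>0$, which holds since a valuation is positive.)

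Next I would differentiate and clear denominators. Writing $D(b)=1+e^{-\alpha}b^{-\beta}$, the quotient rule gives $f'(b)=\bigl[-D(b)+(V-b)\beta e^{-\alpha}b^{-\beta-1}\bigr]/D(b)^2$. Setting the numerator to zero and multiplying through by $e^{\alpha}b^{\beta+1}>0$ collapses the stationarity condition to the clean form
\[
h(b):=e^{\alpha}b^{\beta+1}+(1+\beta)\,b=\beta V.
\]
Because $h$ is a sum of strictly increasing functions on $(0,\infty)$ with $h(0^+)=0$ and $h(\infty)=\infty$, the equation $h(b)=\beta V$ has a unique root $b^*$. Combined with the sign analysis of the previous step — a smooth function that rises from $0$, stays positive on $(0,V)$, and returns to $0$, while having a single interior critical point, must attain its global maximum there — this yields existence and uniqueness of $b^*$.

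Finally both bounds drop out of $h(b^*)=\beta V$ by elementary estimates. For the upper bound I would simply discard the positive term $e^{\alpha}(b^*)^{\beta+1}$, leaving $(1+\beta)b^*<\beta V$, i.e. $b^*<\frac{\beta}{\beta+1}V$. For the lower bound I would feed this back in: since $0<b^*<V$ and $\beta>0$ we have $(b^*)^{\beta}<V^{\beta}$, hence $(b^*)^{\beta+1}<V^{\beta}b^*$; substituting into $h(b^*)=\beta V$ gives $\beta V<(e^{\alpha}V^{\beta}+1+\beta)\,b^*$, which rearranges to the stated lower bound $b^*>\frac{\beta}{\beta+1+e^{\alpha}V^{\beta}}V$ (strict, hence consistent with the weak inequality claimed).

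I expect the only real subtlety to be the algebraic step that turns $f'(b)=0$ into the monotone normal form $h(b)=\beta V$; once that is in hand, uniqueness is immediate from monotonicity of $h$, and each bound is a one-line consequence of keeping or discarding the single term $e^{\alpha}(b^*)^{\beta+1}$. The remaining care is bookkeeping: confirming that the unique root indeed lands in $(0,V)$ (which the upper bound already guarantees, since $\frac{\beta}{\beta+1}V<V$) so that it is the maximizer rather than a spurious stationary point.
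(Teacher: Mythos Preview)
Your proposal is correct and follows essentially the same route as the paper: both reduce the first-order condition to the monotone equation $e^{\alpha}b^{\beta+1}+(\beta+1)b=\beta V$ (the paper writes it as $h(b)=\beta V-(\beta+1)b-e^{\alpha}b^{\beta+1}$ and tracks its sign, which is the same thing), then obtain the upper bound by dropping the $e^{\alpha}b^{\beta+1}$ term and the lower bound by replacing $b^{\beta+1}$ with $V^{\beta}b$ on $(0,V]$. Your version is slightly more explicit in justifying that the unique critical point is indeed the global maximizer via the boundary analysis $f(0^+)=f(V)=0$, $f>0$ on $(0,V)$, which the paper leaves implicit.
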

\begin{proof}
Taking the derivative, we have
\[
f'(b) = \frac{\beta V - (\beta+1)b - e^\alpha b^{\beta+1}}{(1+e^{-\alpha}b^{-\beta})^2e^\alpha b^{\beta+1}}.
\]
Note that the denominator is always positive. Thus to find $b^*$ that maximizes $f(b)$ it's sufficient to consider the sign of
\[
h(b) \stackrel{\rm def}{=} \beta V - (\beta+1)b - e^\alpha b^{\beta+1}.
\]
Since $h(b)$ is a decreasing function in $b$, for any $b\in(0, V]$, $h(b)$ can be bounded as
\[
\beta V - (\beta+1)b - e^\alpha V^\beta\,b
\leq h(b) <
\beta V - (\beta+1)b,
\]
which implies that
\[
h\left(\frac{\beta}{\beta+1+e^{\alpha} V^{\beta}}V\right)\geq 0
\text{ and }
h\left(\frac{\beta}{\beta+1}V\right) < 0.
\]
Therefore there is a unique value $b^*\in \left[\frac{\beta}{\beta+1+e^{\alpha} V^{\beta}}V, \frac{\beta}{\beta+1}V\right)$ such that $h(b^*)=0$, and hence $f'(b^*)=0$. In other words, $f(b)$ is maximized at $b=b^*$.
\end{proof}

Theorem~\ref{thm:interval} allows us to implement a fast bisection search algorithm~\ref{alg:bisection} for the optimal bid price.
\begin{algorithm}
\caption{Bisection Algorithm Surplus Maximization}
\label{alg:bisection}
\begin{algorithmic}[1]
\Require \\
\begin{itemize}
    \item Model weights: $w_0, w_1, \ldots, w_k, \beta$;
    \item Feature values $x_1, x_2, \ldots, x_k$;
    \item $V$: expected value of the current ad opportunity
    \item $\epsilon>0$: minimum valid interval length
    \item $N$: maximum number of search steps
\end{itemize}
\Ensure $\beta > 0, V>0$
\Statex
\State $\alpha\leftarrow w_0+\sum_{i=1}^kw_ix_i$.
\State $b_{\min}\leftarrow \frac{\beta}{\beta+1+e^{\alpha} V^{\beta}}V$
\State $b_{\max}\leftarrow \frac{\beta}{\beta+1}V$
\For{$i = 1, 2, \ldots, N$}
    \State $\text{fp}_{\min} \leftarrow \beta V - (\beta+1)b_{\min} - e^\alpha b_{\min}^{\beta+1}$
    \State $\text{fp}_{\max} \leftarrow \beta V - (\beta+1)b_{\max} - e^\alpha b_{\max}^{\beta+1}$
    \State $r\leftarrow -\text{fp}_{\min}/(\text{fp}_{\max}-\text{fp}_{\min})$
    \label{step:ratio}
    \State $b\leftarrow (1-r)b_{\min} + r\,b_{\max}$
    \label{step:cut}
    \State $\text{fpb}\leftarrow \beta V - (\beta+1)b - e^\alpha b^{\beta+1}$
    \If{$\text{fpb} < 0$}
        \State $b_{\min}\leftarrow b$
    \Else
        \State $b_{\max}\leftarrow b$
    \EndIf
    \If{$b_{\max}-b_{\min} < \epsilon$}
        \State \textbf{break}
    \EndIf
\EndFor
\Statex
\Return $b$
\end{algorithmic}
\end{algorithm}
Starting with the minimum and maximum bounds on the surplus optimum, $b_{\min}= \frac{\beta}{\beta+1+e^{\alpha} V^{\beta}}V$
and $b_{\max}= \frac{\beta}{\beta+1}V$, 
per Theorem~\ref{thm:interval}, we know that the lower bound for optimum has positive derivative, and the high bound has negative. Bisection can divide the range and find the zero point for the derivative in at most $\log_2 [(b_{\max} - b_{\min})/\epsilon]$ steps; this logarithmic time is extremely desirable since the maximization search must run in real-time in the ad-server.

We found in practice that we could use the gradient information to speed up the search further. Rather than cutting the range in half each time ($r=0.5$; step~\ref{step:ratio}), after testing the gradient of the minimum and maximum bid points, we use our knowledge that the surplus function is convex and so derivatives shorten close to the optimum. We calculate the ratio between the surplus derivative at min and max bid locations, and then use that estimate for the relative distance to the optimum in bid space. Steps~\ref{step:ratio} and~\ref{step:cut} of the pseudo-code show this modification to $r$. Empirically we observed that the bisection ends in less 10 iterations to achieve a sufficient precision, which is controlled by $\epsilon$.

\section{Implementation}
\label{sec:implementation}
The features used for predicting win probability comprise 12 variables extracted from the HTTP of an incoming bid request, along with \texttt{log(bid price)} and \texttt{log(bid price before shading)}. The HTTP attributes include the requesting page (e.g., \texttt{yahoo.com}), device type (e.g., \texttt{desktop}), hour of day; day of week,  country, user segment, and other variables.
All of the HTTP features are encoded to be binary variables.

For production we use one week of historical data for training so that weekly patterns are captured. The training data typically contain over a billion of bid requests with less than 100K encoded features. For the curve fit, we used the LogisticRegression method that is part of the PySpark pyspark.ml.classification library~\cite{pyspark}, which is distributed. The training time depends on the number of allocated machines. With less than 100 machines the training can finish within a few hours.

At run-time, the shading algorithm needs to respond to millions of requests per second peak load, within 100 milliseconds for all systems.
In order to meet these speed constraints, bid shading has to minimize the number of computations that it performs. 
In terms of memory, by using a single global model, memory consumption is kept to less than $100K$ floating point numbers. 
In terms of time,  shading optimization averages just less than 20 floating-point operations per request.

\section{Shading Insights}
\label{sec:insights}
Here we describe a few features that we observed to be predictive in the win-rate model. 
The numerical features logarithm of bid price before shading and logarithm of bid price are both highly predictive\footnote{In the following, the regression coefficient is labeled $w$ and $PR$ is the observed positive rate of the binary variable} ($w$  = -0.39 and 0.565; McFadden $R^2$=0.24 and 0.20 respectively \cite{freese, ucla}). The high predictiveness of bid price before shading - and yet negative sign when included with bid price - is consistent with previous observations that bid shading tends to be deeper in auctions with higher valuations  \cite{chakravorti1995auctioning, battigalli2003rationalizable, hortaccsu2018bid}.

The top binary feature in terms of impact on win probability is \texttt{is\_new\_user} ($w=0.831$; $PR$=0.52), which is associated with an increase in chance of winning the auction (since bid prices are lower).
auctions. 
\texttt{hour\_of\_day=6am} (user local hour) ($w=-0.267$; $PR$=0.01) is associated with a drop in the probability of winning, 
likely due to the reduction in supply \cite{kittslookahead}. \texttt{country=US} ($w=-0.110$; $PR$=0.84) decreases the chance of winning; and the largest 768x1024 ads also are less likely to be won ($w=-0.267$; $PR$=0.01).

The predictability of time, user, and other features, for estimating auction clearing prices, suggest that shading should be effective, as noted in work~\cite{pownall2013bidding} on the preconditions for shading described in Section~\ref{sec:related-work}.

\section{Comparison to Benchmarks}
\label{sec:experiments}
We ran several of the algorithms in Section \ref{sec:related-work} as benchmarks. These included: (1) Sell-Side Shading Service (S4) \cite{adx:rubicon, rubicon:EMR, appnexus:BPO, google:rtb, google:rtbprotocol}, (2) Non-linear Segment-Based (SEG)  \cite{niklas:shading},  Distribution estimator with Normal (NRML), Exponential (EXP) Distributions \cite{wu2015predicting}, Logistic Regression (LR) \cite{dsp:lr} and Unshaded (Uns). The win-rate based algorithm in this paper is labeled \emph{WR} in the tables to follow.

The prior work benchmarks aren't ideal - the win distribution approaches \cite{wu2015predicting} don't explicitly maximize surplus and so we expect them to not perform as well. The S4 algorithm seems to be geared towards maintaining win rate. Nevertheless, we have included them not only to compare to prior work but also to quantify the gain that surplus maximization approaches can deliver in practice. 

Unlike the other benchmarks, the SEG algorithm does maximize surplus \cite{niklas:shading}. Under a favorable selection of segments, the Segment-Based algorithm might even be tuned to perform as well or better than the current method, despite the scaling problem with using more features. Our purpose in showing these benchmarks isn't to claim that this particular algorithm outperforms the others in all metrics, but rather to show that surplus maximizers have an advantage, to quantify the gain, and to note that WR, which is fully automated, uses all available features to estimate the surplus landscape, and has excellent memory and speed properties, performs comparable to other reported approaches.

The experiments below (except ones with S4) were run on auctions for which the minimum bid prices to win were known. Using this data it was possible to calculate surplus performance as a percentage of the total optimal surplus:
\begin{align*}
    \text{\% opt surplus} &\stackrel{\rm def}{=}
    \frac{\sum_i(V_i-b_i)\mathbf{I}(b_i>\hat b_i)}{\sum_i(V_i-b_i)},
\end{align*}
i.e., the surplus achieved by a particular algorithm out of total available surplus by bidding optimally.
Spend and impression performances can be defined similarly.

The algorithms were tested on one day of auction data. For fair comparison training is done on data from the previous day, since not all algorithms are designed to be trained on multiple days of data. All the bid requests are scored by each algorithm, so all algorithms operate on the same set of records. The results are shown in Table \ref{tab:Exp1}.

The distribution estimator methods (NRML, EXP) estimate the minimum bid to win and so are not expected to do well in maximizing surplus. As a group they were about 7\% below WR.
The Nonlinear Segment method generated the second highest surplus besides WR. This makes sense given that it is a legitimate surplus maximizer. WR generates the highest surplus (50.6\%). In sum, the surplus maximizers produced the most surplus, which was expected.

\newcolumntype{H}{>{\setbox0=\hbox\bgroup}c<{\egroup}@{}}

\begin{table}
\setlength{\tabcolsep}{3pt}
  \begin{tabular}{ cHccccHcc }
    \toprule
    Metric&Opt&WR&SEG&NRML&LR&Gam&EXP&Uns\\
    \midrule
\%opt surplus&100\%&50.6\%&49.0\%&48.0\%&47.3\%&46.7\%&46.0\%&0\%\\
\%opt spend&100\%&41.7\%&56.0\%&42.7\%&39.8\%&32.5\%&31.1\%&176\%\\
\%opt imps&100\%&56.6\%&49.1\%&53.1\%&50.3\%&43.4\%&42.6\%&100\%\\
avg shading factor&0.55&0.6&0.55&0.62&0.61&0.44&0.42&1.00\\
  \bottomrule
\end{tabular}
  \caption{Benchmark Algorithms}
  \label{tab:Exp1}
\end{table}

We also compared an S4 algorithm from an anonymous SSP. We had to separate this analysis due to a service issue. When using the S4 for real-time bidding, the service disabled the minimum bid to win functionality. As a result, we were unable to do an optimality analysis. 

Overall, the S4 delivered about 15\% more impressions than WR - as noted SSPs have an incentive to try to monetize more traffic. However it delivered about 4.3\% lower surplus. The bidding distribution from the S4 is shown in 
Figure \ref{fig:winrate_threedistributions}.
\begin{figure}
\centering
\subfloat{\includegraphics[trim=0cm 1cm 0cm 1cm,clip, width=1\columnwidth]{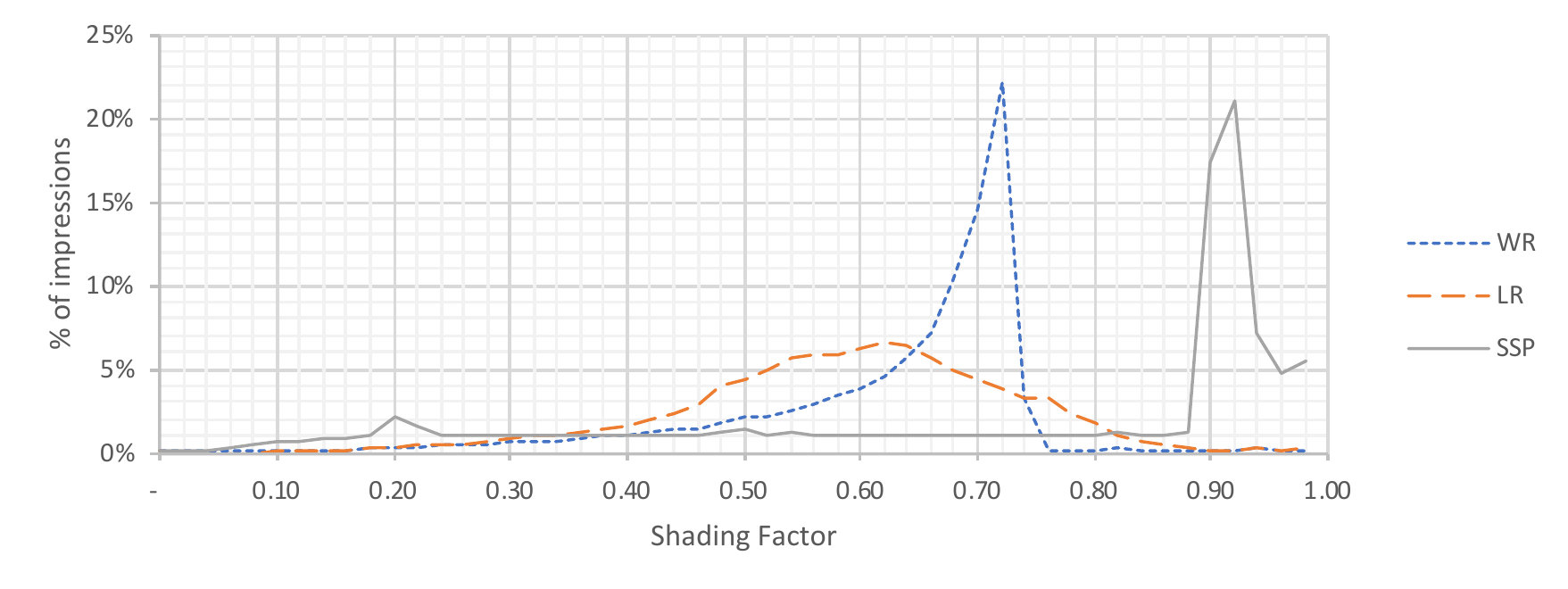}}
\caption{Shading factor distributions for three algorithms. S4 has more shallow shading factors.}
\label{fig:winrate_threedistributions}
\end{figure}
Whereas the SSP's shading distribution is right-skewed, with most shading at 90\% and above, the WR distribution - which generates more surplus - is left-skewed, with most shades below 72\%. It seems likely that the S4 is geared towards generating high sales, but not necessarily high advertiser surplus.

\section{Production Results}
\label{sec:Production}
After rolling out the WR algorithm, we were able to monitor its online performance by maintaining a percentage of traffic that was randomly allocated to each algorithm. 
The analysis shown in Table~\ref{tab:Exp2} spans about two months, during which time all algorithms were automatically updated at daily basis.
\begin{table}
  \begin{tabular}{ ccccc }
    \toprule
Metric&WR&LR&SEG&Uns\\
    \midrule 
\% opt surplus&46.7\%&44.8\%&38.2\%&0.0\%\\
\% opt spend&79.1\%&72.6\%&89.9\%&410\%\\
\% opt imps&60.3\%&51.4\%&56.0\%&100\%\\
avg shading factor& 0.55& 0.53 & 0.59 & 1.00 \\
  \bottomrule
\end{tabular}
\caption{Production Results}
  \label{tab:Exp2}
\end{table}
Overall WR captured 46.7\% of the maximum possible surplus, whereas Non-linear captured 38\%. Bid prices on WR were about 45\% lower than their unshaded prices.

Note that in a real-time bidding system, campaigns usually have finite budgets, and budget controller is a necessary component in such a system. The production performance of a bid shading algorithm relies on how well it works together with the budget controller. In a simplified view, a reasonable controller is expected to spend all the daily budget, and hence the budget \emph{saved} by a bid shading algorithm, that is, the surplus, would be spent again to buy more impressions, thus leads to lower eCPM, eCPC, and eCPA~\cite{dsp:metrics}. Indeed, as shown in Table~\ref{tbl:prod-metrics}, with similar spend WR achieved significant improvements on these business metrics.
\begin{table}
  \begin{tabular}{ lccccc }
    \toprule
    A/B Testing & Spend & Surplus & eCPM & eCPC & eCPA \\
        \midrule
    WR v.s. LR & +1.3\% & +1.4\% & -7.4\% & -4.5\% & -2.7\% \\
    WR v.s. SEG & +1.2\% & +2.5\% & -5.4\% & -5.5\% & -3.9\% \\
        \bottomrule
\end{tabular}
\caption{Improvements on Business Metrics}
\label{tbl:prod-metrics}
\vspace*{-5ex}
\end{table}

\section{Conclusion}
\label{sec:conclusions}
There is evidence that First Price Auctions have created problems for advertisers. Average traffic prices are higher, with  estimates ranging between 5\% and 50\% 
\cite{rubicon:EMR, kitts:fpa, appnexus:BPO, hearts2018:sold}.
\cite{kitts:fpa} also reported that after their SSP switched to First Price, 10\% of advertisers actually discontinued bidding. 
Our experiments confirm these findings; without a shading solution, CPM would approximately double. 

DSPs are required to compute the private value of impressions based on advertiser  parameters, and they also execute a large number of trades, and so can build up an ability to predict auction prices. This makes it possible to implement rational  shading similar to other industries \cite{lengwiler2010auctions, hendricks1989collusion, hortaccsu2018bid}. 
Advertiser bids follow the value of traffic, and this follows daily, hourly, and site patterns. 
As a result, auction prices will always have structure that can be used by some advertisers with other advertisers have less flexibility.

The surplus maximization approach of this paper delivered about 7\% higher surplus than naive methods just designed to submit the probable clearing price. Furthermore, when integrated with budget controllers, it significantly reduced eCPM, eCPC and eCPA by 3\%~--~7\%. Publicly available data shows medium sized DSPs managing between 260 to 1 billion US dollars in advertiser spend \cite{dsprev:idc}. The Shading gains reported in this paper therefore represent 18 to 100 million US dollars in additional yield that is provided to advertisers.
Shading has an enormous impact on advertiser profitability. Now that the online ad industry has increasingly shifted to First Price Auctions, it seems likely that the new advertising technology arms race will be in the domain of bid shading.



\bibliographystyle{ACM-Reference-Format}

\balance
\bibliography{references}


\end{document}